\documentclass[a4paper,DIV=10,11pt,titlepage=off]{scrartcl}

\usepackage{graphicx}
\usepackage{mathpazo}
\usepackage[margin=1in]{geometry}
\usepackage{amsmath,amssymb,amsthm}
\usepackage{booktabs}
\usepackage{enumitem}
\usepackage[expansion=false]{microtype}
\usepackage[dvipsnames]{xcolor}
\usepackage[colorlinks=true,linkcolor=blue!55!black,urlcolor=blue!55!black,
            citecolor=blue!55!black]{hyperref}
\usepackage[capitalise,noabbrev]{cleveref}
\usepackage{authblk}

\theoremstyle{plain}
\newtheorem{theorem}{Theorem}[section]
\newtheorem{lemma}[theorem]{Lemma}
\newtheorem{proposition}[theorem]{Proposition}
\newtheorem{corollary}[theorem]{Corollary}
\newtheorem{conjecture}[theorem]{Conjecture}

\theoremstyle{definition}

\newtheorem{problem}[theorem]{Problem}
\newtheorem{example}[theorem]{Example}
\theoremstyle{remark}
\newtheorem{remark}[theorem]{Remark}

\crefname{problem}{Problem}{Problems}
\Crefname{problem}{Problem}{Problems}
\crefname{assumption}{Assumption}{Assumptions}
\Crefname{assumption}{Assumption}{Assumptions}
\crefname{conjecture}{Conjecture}{Conjectures}
\Crefname{conjecture}{Conjecture}{Conjectures}

\providecommand{\ket}[1]{\lvert #1 \rangle}

\DeclareMathOperator{\Cl}{Cl}
\DeclareMathOperator{\sd}{sd}
\DeclareMathOperator{\rank}{rank}
\DeclareMathOperator{\spec}{spec}
\DeclareMathOperator{\Tor}{Tor}
\DeclareMathOperator{\Ind}{Ind}
\newcommand{\poly}{\operatorname{poly}}
\newcommand{\FF}{\mathbb F}
\newcommand{\ZZ}{\mathbb Z}
\newcommand{\QQ}{\mathbb Q}
\newcommand{\RR}{\mathbb R}

\newcommand{\Ht}{\widetilde H}
\newcommand{\Ct}{\widetilde C}
\newcommand{\bt}{\widetilde\beta}

\newcommand{\QMA}{\mathsf{QMA}}
\newcommand{\QMAone}{\mathsf{QMA}_1}
\newcommand{\DQCone}{\mathsf{DQC1}}
\newcommand{\NP}{\mathsf{NP}}
\newcommand{\DCH}{\textup{\textsc{CliqueHom}}}
\newcommand{\DCHtwo}{\textup{\textsc{CliqueHom}}_{\FF_2}}
\newcommand{\DCHtfno}{\textup{\textsc{CliqueHom}}^{\mathrm{tf}\text{-}\textsc{no}}}
\newcommand{\GTD}{\textup{\textsc{GappedTorsion}}}

\newcommand{\yes}{\textup{\textsc{yes}}}
\newcommand{\no}{\textup{\textsc{no}}}

\title{Torsion detection in clique complexes is conditionally $\QMAone$-hard}
\author{Adam Wesołowski\thanks{\texttt{Adam.Wesolowski@cs.ox.ac.uk}}}
\affil{\small \textit{Department of Computer Science, University of Oxford, Parks Rd, Oxford OX1 3QG, United Kingdom}}
\date{}

\begin{document}
\maketitle
\vspace{-1.4cm}
\begin{abstract}
Quantum algorithms for topological data analysis compute Betti numbers, the ranks of the
homology groups of a simplicial complex, which can be read off from the kernel of a
combinatorial Laplacian. Deciding whether a Betti number of a clique complex is nonzero
is $\QMAone$-hard, and remains so under a spectral gap promise on vertex-weighted graphs
\cite{CK24,KK24}. Integral homology, however, contains information inaccessible to the Laplacian
spectrum. A new part that appears in integral homology is \emph{torsion}: cycles that become boundaries only after being
traversed several times, as in a projective plane or a Klein bottle.

We provide a simple reduction that allows us to establish the first hardness results for the problem of detecting torsion. We attach
to an arbitrary clique complex a fixed $31$-vertex triangulation of the projective plane.
The $k$th mod-$2$ Betti number of the input then reappears as $2$-torsion two degrees up,
while all rational homology disappears and every combinatorial Laplacian acquires a
constant spectral gap. We conclude that detecting torsion in clique complexes of
\emph{unweighted} graphs is $\NP$-hard, even under a constant gap promise, and that it is
$\QMAone$-hard if mod-$2$ clique homology is. 

\end{abstract}

\tableofcontents

\section{Introduction}
\label{sec:intro}

Topological data analysis (TDA) asks how many connected
components, loops, voids and higher-dimensional holes does the space contain?  Over the integers,
the $k$th homology group splits as
\[
H_k(X;\mathbb{Z}) \;\cong\; \mathbb{Z}^{\beta_k} \oplus \operatorname{Tor} H_k(X;\mathbb{Z}),
\]
where $\beta_k$ is the $k$th Betti number and $\operatorname{Tor} H_k(X;\mathbb{Z})$ is a
finite torsion subgroup. Passing to real coefficients kills the torsion summand outright,
$H_k(X;\mathbb{R})\cong\mathbb{R}^{\beta_k}$, so two complexes with identical Betti numbers
can still differ once integral coefficients are restored.

Quantum TDA started with Lloyd, Garnerone and Zanardi \cite{LGZ16} and was refined over
the following decade \cite{GK19,GCD22,Hay22,UAS+21,MGB26,BSG+24}. It uses a
linear-algebraic route to Betti numbers. The number $\beta_k$ is the dimension of the
kernel of the combinatorial Laplacian $\Delta_k=B_k^{*}B_k+B_{k+1}B_{k+1}^{*}$, built from
the real boundary maps $B_k$ of the complex; since the kernel is computed over
$\mathbb{R}$, it recovers only the free rank $\beta_k$ and is structurally blind to
torsion. The Laplacian is sparse and can be block encoded from an adjacency oracle.
Counting holes therefore becomes a ground-space problem for a Hermitian operator on
potentially exponentially many simplices. Crichigno and Kohler \cite{CK24} showed that
deciding whether a clique complex has a $k$-dimensional hole is $\QMAone$-hard. King and
Kohler \cite{KK24} showed the same for the \emph{gapped} version on vertex-weighted
graphs, and placed that version in $\QMA$.

\paragraph{Integral torsion.}
Integral homology records qualitatively different information than the number of holes in
a topological space. Some cycles are not boundaries, and never become boundaries however
many times you go around them. The Betti number counts these cycles. Other cycles are not
boundaries, but going around them $d$ times \emph{is} a boundary. These are torsion
classes, and $d$ is the torsion order. The model example is the loop $\ell$ of the
projective plane. It is not a boundary, but traversing it twice sweeps out the whole
surface, so $2\ell$ is. Hence $\Ht_1(\mathbb{RP}^2;\ZZ)\cong\ZZ/2$. Torsion records how a
space is glued to itself, or in other words how it is `twisted'.

Over the real numbers torsion disappears. From $dz=\partial w$ one divides by $d$ and
concludes that $z$ is a boundary. Hodge theory needs an inner product, an inner product
needs $\RR$, and $\RR$ cannot see torsion. Reducing the Laplacian modulo a prime does not
help either. Over $\FF_p$ the bilinear form on chains is degenerate, so the Hodge
decomposition fails and $\ker(\Delta_k\bmod p)$ is not the mod-$p$ homology. A single edge
already has $\Delta_1\equiv0\bmod2$ while $H_1=0$. So every Laplacian-based algorithm,
classical or quantum, is blind to torsion. It cannot tell a Klein bottle from a circle.
Both have a single rational $1$-cycle, and the Klein bottle differs only by a class of
order two.

We will repeatedly need the opposite situation. We say a complex
is \emph{$2$-torsion-free in degree $k$} if its integral homology group $\Ht_k(X;\ZZ)$
contains no element of order $2$. In other words, there is no $k$-cycle that fails to
bound but whose double bounds. Most familiar spaces are $2$-torsion-free in every degree,
for example spheres, tori, orientable surfaces and wedges of spheres. The projective plane
and the Klein bottle are the basic examples that are not. The point of the notion is the
universal coefficient theorem. If $X$ is $2$-torsion-free in degrees $k-1$ and $k$, then
its mod-$2$ and rational Betti numbers in degree $k$ coincide. Counting holes over $\FF_2$
and over $\QQ$ then gives the same answer. When $2$-torsion is present the mod-$2$ count
is strictly larger.

Torsion is not an exotic invariant. Carlsson, Ishkhanov, de Silva and Zomorodian found
that the space of high-contrast $3\times3$ patches of natural images concentrates on a
Klein bottle \cite{CIdSZ08}. The conformation space of cyclo-octane, a benchmark in
computational chemistry, is a sphere and a Klein bottle glued along two circles
\cite{MTCW10}. The Klein bottle component is certified in data by computing persistent
Betti numbers over $\FF_2$ and $\FF_3$ and seeing them disagree \cite{MPSBF19}. Models of
orientation preference in the visual cortex map the stimulus space onto a Klein bottle
\cite{Swi96}. In each case the interesting signal is torsion. Classical
persistent-homology software therefore offers coefficients in arbitrary prime fields
\cite{Bau21}. The effect of the choice of field has been analysed in detail \cite{OY23},
and integral persistence has been developed \cite{RRS14}. Torsion of simplicial complexes
is also a classical object in its own right \cite{Lor08,DKM09,DKM13,New19}. Classically,
on an explicitly given complex, torsion is computed in polynomial time by Smith normal
form.

\paragraph{A new candidate for quantum advantage.}
To our knowledge, the computational complexity of evaluating or even detecting torsion
has never been studied. We are not aware of any hardness results or of any prior work on quantum
algorithms. Whether torsion is a promising target for quantum advantage is therefore an
open question, and it remains entirely plausible that quantum algorithms for this
property of topological spaces might exhibit quantum advantage. In this work we present
the first quantum complexity analysis of the torsion detection problem.

\paragraph{Our contributions.}
We give a reduction that translates the mod-$2$ Betti number of the input into nontrivial
torsion of the output. The gadget we use is the six-vertex projective plane, the standard
example of a rationally acyclic complex with torsion in the simplicial matrix-tree
literature \cite{Kal83,DKM13}, made into a clique complex by the so-called
\emph{barycentric subdivision}.
This work builds and extends on several key innovations and observations of a preliminary version of \cite{TLW+26}. While \cite{TLW+26} concentrates on building a practical algorithm, this work takes a different direction and concentrates on complexity aspect.

The homology of a join complex is given by the K\"unneth
formula. The Laplacian of a join is a Kronecker sum, which is how \cite{KK24,BSG+24} build
gapped benchmark families. We show that joining the gadget onto an \emph{arbitrary}
clique complex plants its mod-$2$ Betti number as $2$-torsion two degrees up, and at the
same time yields a constant spectral gap on every Laplacian of the join. That is enough
to transfer hardness from mod-$2$ Betti number estimation to torsion detection.
\begin{itemize}[leftmargin=*,itemsep=2pt]
\item \emph{A gadget that turns holes into torsion.} We join the input complex $X$ with a
fixed $31$-vertex clique complex $Y$ triangulating the projective plane. We prove that
the join has no rational homology at all, that its integral homology in degree $k+2$ is
$(\ZZ/2)^{\bt_k(X;\FF_2)}$, and that every one of its Laplacians has smallest eigenvalue
at least an absolute constant $c_Y\approx0.0979$. Our reduction adds $31$ vertices and
answers each adjacency query with one adjacency query to $X$.
\item \emph{Hardness.} We prove that detecting torsion in the clique complex of an
unweighted graph, under a constant Laplacian-gap promise, is $\NP$-hard, and that
\textbf{it is at least as hard as mod-$2$ clique homology}. We prove that it is
$\QMAone$-hard under a natural conjecture, namely that clique homology remains
$\QMAone$-hard when its \no\ instances are promised to have no $2$-torsion in the two
relevant degrees (\cref{thm:main,cor:np-hard}).
\end{itemize}
\Cref{fig:overview} summarises our reduction.

\begin{figure}[t]
\centering
\includegraphics[width=\linewidth]{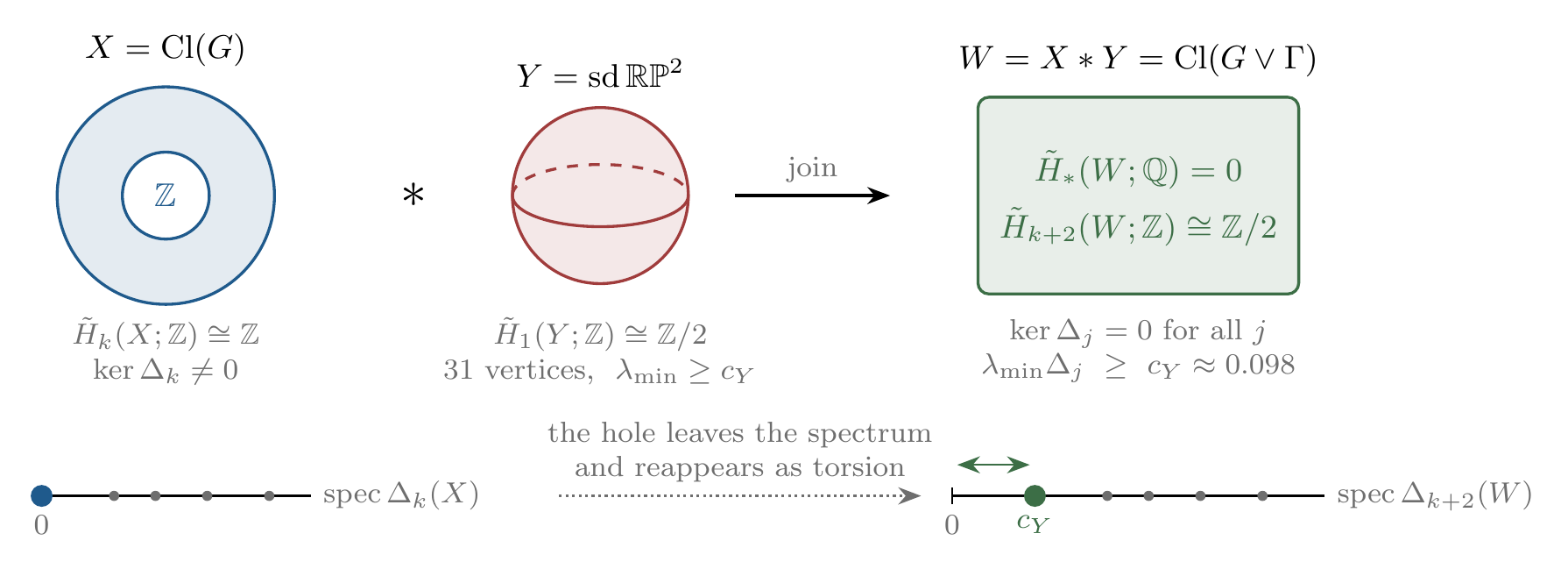}
\caption{Overview. Joining the input with a fixed projective plane removes all rational
homology, gaps every Laplacian, and re-encodes the $k$th mod-$2$ Betti number as
$2$-torsion in degree $k+2$. }
\label{fig:overview}
\end{figure}

\subsection{Results}
\label{sec:results}

Throughout, graphs are given by an adjacency oracle, the efficient access model of
\cite{CK24,KK24}. A graph on $N$ vertices specifies a complex with up to
$\binom{N}{k+1}$ simplices in degree $k$. We prove the following.

\begin{theorem}[Hardness, informal version of \cref{thm:main,cor:np-hard}]
\label{thm:main-informal}
Let\\ $\GTD_{n,g}$ be the following problem. Given an unweighted graph $H$ and an integer
$n$, under the promise that $\Ht_n(\Cl(H);\ZZ)$ is finite and that every augmented
Laplacian of $\Cl(H)$ has smallest eigenvalue at least $g$, decide whether
$\Ht_n(\Cl(H);\ZZ)\neq0$. Then, with the absolute constant $g=c_Y\approx0.0979$:
\begin{enumerate}[label=(\roman*),leftmargin=*,itemsep=1pt]
\item $\GTD_{n,c_Y}$ is $\NP$-hard.
\item $\GTD_{k+2,c_Y}$ is at least as hard as deciding whether
$\Ht_k(\Cl(G);\FF_2)\neq0$ for unweighted $G$.
\item Under \cref{conj:torsionfree}, $\GTD_{n,c_Y}$ is $\QMAone$-hard.
\end{enumerate}
Our reduction adds $31$ vertices. It answers one adjacency query to its output with one
adjacency query to its input and $O(1)$ reversible gates.
\end{theorem}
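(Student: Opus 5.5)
The plan is to reduce from mod-$2$ clique homology via the join $X * Y$, where $X=\Cl(G)$ and $Y=\Cl(\sd P)$ is the clique complex of the barycentric subdivision of the six-vertex $\mathbb{RP}^2$, which has $6+15+10=31$ vertices. First I fix the graph-level construction. $H = G * \sd P$ is the graph on $V(G)\sqcup V(\sd P)$ that keeps all edges of $G$ and of $\sd P$ and adds every edge between the two parts. Then $\Cl(H)=\Cl(G)*\Cl(\sd P)$, because a clique of $H$ is exactly the disjoint union of a clique of $G$ and a clique of $\sd P$. An adjacency query to $H$ is answered by comparing labels: if both vertices lie in $G$ we make one query to $G$; if both lie in $Y$ we look up a hardcoded table; if they lie in different parts the answer is ``adjacent''. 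This costs one query and $O(1)$ reversible gates.

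Next comes the homology of the join. I use the Künneth formula for joins,
\[
\Ht_{n+1}(X*Y;\ZZ)\cong\bigoplus_{i+j=n}\Ht_i(X)\otimes\Ht_j(Y)\;\oplus\bigoplus_{i+j=n-1}\Tor(\Ht_i(X),\Ht_j(Y)).
\]
Since $Y$ is a triangulated $\mathbb{RP}^2$, the only nonzero reduced group is $\Ht_1(Y)=\ZZ/2$. So
\[
\Ht_{k+2}(X*Y)\cong \Ht_k(X)\otimes\ZZ/2\oplus\Tor(\Ht_{k-1}(X),\ZZ/2).
\]
By the universal coefficient theorem this is $\Ht_k(X;\FF_2)\cong(\ZZ/2)^{\bt_k(X;\FF_2)}$. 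Every summand is $2$-torsion, so all rational homology vanishes and each $\Ht_n(X*Y)$ is finite. Therefore $\Ht_{k+2}(\Cl(H))\neq0$ if and only if $\Ht_k(\Cl(G);\FF_2)\neq0$, which gives (ii).

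For the gap, I use the fact that the augmented Laplacians of a join form a Kronecker sum: $\Delta^{X*Y}_{n+1}$ decomposes as $\bigoplus_{i+j=n}(\Delta^X_i\otimes I + I\otimes\Delta^Y_j)$, with index $-1$ allowed for the empty face. Its spectrum is therefore $\{\lambda+\mu\}$, and its smallest eigenvalue is at least $\min_j\lambda_{\min}(\Delta^Y_j)$ because $\Delta^X\succeq0$. Since $Y$ has no rational homology, every $\Delta^Y_j$ is positive definite, and $c_Y$ is the minimum over $j=-1,0,1,2$. This is a finite computation on $31$ vertices, which I would carry out numerically and certify exactly, for example by an $LDL^\top$ factorisation of $\Delta^Y_j-c\,I$ over the rationals for a rational $c$ slightly below $0.0979$.

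For (i), I would use that mod-$2$ clique homology is $\NP$-hard for unweighted graphs. The independent-set-type reductions of \cite{CK24} produce wedges of spheres, so the same reduction works over $\FF_2$, and (ii) then gives $\NP$-hardness. For (iii), \cref{conj:torsionfree} supplies a $\QMAone$-hard instance family in which the \no\ instances are $2$-torsion-free in degrees $k-1$ and $k$. On those instances the rational and mod-$2$ Betti numbers agree, so the rational \yes/\no\ promise transfers to mod-$2$ homology, and (ii) yields (iii). I expect the main obstacle to be the gap step: the Kronecker-sum decomposition must hold exactly with the paper's normalisation of the augmented Laplacian, and $c_Y$ must be certified rigorously, because the value $0.0979$ cannot be derived by hand.
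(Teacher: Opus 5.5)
\textbf{Overall.} Apart from step (i), your proof follows the paper. You use the same join $G\vee\Gamma$ with the $31$-vertex comparability graph, and K\"unneth plus universal coefficients to get $\Ht_{k+2}(X*Y;\ZZ)\cong(\ZZ/2)^{\bt_k(X;\FF_2)}$ and rational acyclicity. For the constant gap you use the Kronecker-sum decomposition, which holds exactly under the isometric identification of chains: the cross terms cancel because the sign $(-1)^{i+1}$ in the join boundary anticommutes with $B_X$ and $B_X^{*}$. For (iii) you compose the identity map $\DCHtfno_k\to\DCHtwo$ with (ii). One small correction there: the $\FF_2$ and $\QQ$ Betti numbers are only guaranteed to agree on \no\ instances. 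On \yes\ instances you use only the inequality $\bt_k(X;\FF_2)\ge\bt_k(X;\QQ)$, which always holds.

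\textbf{The gap in (i).} You claim that the reductions of \cite{CK24} produce wedges of spheres. Nothing supports this. Their $\QMAone$-hardness goes through the correspondence with the zero-energy space of a supersymmetric fermionic hard-core model, that is, through Hodge theory over $\RR$. Nothing in it controls $\FF_2$-homology or integral torsion. In fact, if their hard instances were wedges of spheres, their \no\ instances would be $2$-torsion-free. Then \cref{conj:torsionfree} would be a theorem and (iii) would be unconditional, which shows the claim cannot be taken for granted.

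\textbf{How to repair it.} You need an $\NP$-hard family of clique-homology instances whose integral homology is \emph{provably} free. The paper takes this from Adamaszek and Stacho \cite{AS12,AS16}: deciding $\Ht_k(\Ind(Q))\neq0$ is $\NP$-complete for chordal $Q$. Here $\Ind(Q)=\Cl(\overline Q)$ is vertex-decomposable \cite{Woo09}, hence contractible or homotopy equivalent to a wedge of spheres \cite{Kaw10,Ada17}. So its $\FF_2$- and $\QQ$-homology vanish together. This makes $\DCHtwo$ $\NP$-hard on the instances $(\overline Q,k)$, at a cost of one query to $Q$ plus a NOT gate per adjacency query. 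Your (ii) then gives (i).
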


\subsection{Technical overview}
\label{sec:overview}

\paragraph{The question.}
A quantum TDA algorithm looks at a simplicial complex through the real spectrum of its
Laplacian matrices. Torsion does not show up in that window. We show how to \emph{hide} a
known hard problem about Betti numbers inside a problem about torsion.

\paragraph{A gadget that turns holes into torsion.}
We take any complex $X$ and a fixed small complex $Y$ that is a triangulated projective
plane, and we join $X$ with $Y$, that is, we connect every vertex of one to every vertex
of the other. The homology of the new complex is computed from that of the factors by a
K\"unneth formula. The projective plane has no rational homology at all, only a single
class of order two. The K\"unneth formula multiplies every mod-$2$ homology class of $X$
by that order-two class. The product is a class of order two in the join, two degrees up.
As a consequence the join has \emph{no} rational homology in any degree, and its integral
torsion in degree $n+2$ records exactly the $n$th mod-$2$ Betti number of $X$
(\cref{lem:planting}). A hole in $X$ has become torsion in $X*Y$.

\paragraph{The gadget also fixes the conditioning.}
The Laplacian of a join is not a complicated object. On each piece of the join it is the
sum of a Laplacian of $X$ and a Laplacian of $Y$, acting on different tensor factors. Its
eigenvalues are therefore sums of an eigenvalue of $X$ and an eigenvalue of $Y$
(\cref{lem:join-laplacian}). The eigenvalues of $X$ are nonnegative. The eigenvalues of
$Y$ are bounded below by a positive number $c_Y$, because $Y$ has no rational homology.
Hence \emph{every} Laplacian of the join has all its eigenvalues at least $c_Y$, whatever
$X$ is (\cref{prop:gap}). We obtain the constant spectral gap as a by-product of the same
operation that plants the torsion.

\paragraph{The reduction.}
We start from the clique-homology problem over the rationals of \cite{CK24}. Given a
graph, does its clique complex have a hole in degree $k$? We join with the gadget. If
there was a hole, the join has torsion. If there was no rational hole \emph{and} no
$2$-torsion in the two adjacent degrees, the join has no torsion. The output is a clique
complex of an unweighted graph. Every Laplacian of it has a constant gap, all its
rational Betti numbers are zero, and its torsion in one degree encodes the answer. A
spectral algorithm run on it returns ``no homology'' every time. The one caveat is the
$2$-torsion-freeness of the \no\ instances, and only of the \no\ instances. This is a
property of the hard instances of \cite{CK24} that we do not know how to verify, and we
isolate it as \cref{conj:torsionfree}. Without it, our reduction still shows that gapped
torsion detection is at least as hard as \emph{mod-$2$} clique homology. That problem is
$\NP$-hard on instances that are known to be torsion-free \cite{AS12,AS16}. So the
$\NP$-hardness we obtain is unconditional, and the $\QMAone$-hardness is conditional on a
single natural hypothesis.

\paragraph{When torsion is easy.}
Our gadget also delimits the hardness. On instances that are joins with the gadget whose
other factor is $2$-torsion-free in the two relevant degrees, the $2$-rank of the planted
torsion collapses to a rational Betti number of that factor (\cref{rem:easy}). On such
families, torsion carries exactly the information that Betti number estimation extracts.

\subsection{Related work}
\label{sec:related}

Crichigno and Kohler \cite{CK24} consider the plain decision problem. Given an unweighted
graph $G$ and an integer $k$, is $\Ht_k(\Cl(G);\QQ)\neq0$? They show it is
$\QMAone$-hard, and that a promise version of it lies in $\QMA$. King and Kohler
\cite{KK24} consider the \emph{gapped} problem, in which \yes\ instances have
$\ker\Delta_k\neq0$ and \no\ instances satisfy $\lambda_{\min}(\Delta_k)\ge g\ge1/\poly(n)$.
They prove it $\QMAone$-hard and contained in $\QMA$. Their reduction needs
\emph{vertex-weighted} graphs, and the unweighted case is left open. The gap promise is
not mild. Black, Maxwell and Nayyeri \cite[Theorems~1.2, 1.3 and \S6.4.1]{BMN23} exhibit
complexes, including clique-dense ones, whose Laplacian gap is exponentially small.
Schmidhuber and Lloyd \cite{SL23} and Berry et al.\ \cite{BSG+24} analyse when the
normalised Betti-number estimation problem solved by quantum TDA algorithms can offer an
advantage. Gyurik, Cade and Dunjko \cite{GCD22} show $\DQCone$-hardness of low-lying
spectral density estimation, and Cade and Crichigno \cite{CC24} show it for Betti-number
estimation of general chain complexes. The six-vertex projective plane and its role as a
$\mathbb Q$-acyclic complex with torsion come from the enumeration of simplicial spanning
trees \cite{Kal83,DKM09,DKM13}. Lowe, Kim, Bondesan and Hayakawa \cite{LKBH26} prove
$\DQCone$-hardness of low-energy spectral-density problems for clique-complex Laplacians,
the first such results directly on TDA instances. They leave approximate Betti-number
estimation itself open. Gyurik, Schmidhuber, King, Dunjko and Hayakawa \cite{GSKDH24} show
$\mathsf{BQP}_1$-hardness of a harmonic-persistence problem via the guided Hamiltonian.
None of these works addresses the complexity of torsion in the succinct access model.

\section{Preliminaries}
\label{sec:background}

\subsection{Complexes, boundary maps and Laplacians}
\label{sec:notation}

We write $[N]=\{1,\dots,N\}$ and $\binom{[N]}{j}$ for the $j$-subsets of $[N]$. All
complexes are finite abstract simplicial complexes. A face with $r+1$ vertices is an
$r$-simplex. For a graph $G=([N],E)$ the clique (flag) complex $\Cl(G)$ has a simplex for
every vertex set that induces a clique. We write $\overline G$ for the complement graph.

\paragraph{Augmented chains.}
We use the augmented (reduced) convention throughout. The empty face is the unique
$(-1)$-simplex and $\Ct_{-1}(X;\ZZ)\cong\ZZ$. The map $B_0$ sends every vertex to the
empty face, and $B_r$ is the usual signed boundary map for $r\ge1$. We write $\Ht_q(X;R)$
for reduced homology with coefficients in the ring $R$. For nonempty $X$ this differs from
unreduced homology only in degree $0$. Over $\RR$, with the simplex basis declared
orthonormal, the augmented Hodge Laplacians are
\[
\Delta_r=B_r^{*}B_r+B_{r+1}B_{r+1}^{*},\qquad -1\le r\le\dim X,
\]
with the convention $B_{-1}=0$ and $B_{\dim X+1}=0$. The Hodge theorem gives
$\ker\Delta_r\cong\Ht_r(X;\RR)$. With the unaugmented convention, $\Delta_0$ of a connected
complex would have a one-dimensional kernel. The augmented convention removes it, which is
what makes the gap statements below clean. We write $L^{\mathrm{up}}_{r}=B_{r+1}B_{r+1}^{*}$
and $L^{\mathrm{down}}_{r}=B_r^{*}B_r$, so $\Delta_r=L^{\mathrm{up}}_r+L^{\mathrm{down}}_r$.
Since $B_rB_{r+1}=0$, the two summands have orthogonal images and are simultaneously
diagonalisable. Hence every nonzero eigenvalue of $L^{\mathrm{up}}_{r}$ is an eigenvalue
of $\Delta_r$, and the positive squared singular values of $B_{r+1}$ are exactly the
nonzero eigenvalues of $L^{\mathrm{up}}_r$. We write $\zeta^{+}_{\min}(B)$ for the smallest
positive singular value of $B$.

\subsection{Betti numbers and torsion}
\label{sec:torsion}

For a field $F$ we let $\bt_k(X;F)=\dim_F\Ht_k(X;F)$. Real, rational and integral ranks
agree: $\bt_k(X;\RR)=\bt_k(X;\QQ)=\rank_\ZZ\Ht_k(X;\ZZ)$. We write $t_k(X)$ for the
$2$-rank of the torsion subgroup of $\Ht_k(X;\ZZ)$, that is, the number of its invariant
factors of even order. The universal coefficient theorem \cite[Theorem~3A.3]{Hat02} says
\[
\Ht_k(X;\FF_2)\;\cong\;\bigl(\Ht_k(X;\ZZ)\otimes\FF_2\bigr)\;\oplus\;
\Tor\bigl(\Ht_{k-1}(X;\ZZ),\FF_2\bigr).
\]
Together with $\ZZ/m\otimes\FF_2\cong\Tor(\ZZ/m,\FF_2)\cong\FF_2$ for even $m$ and $0$ for
odd $m$, this gives the count we use repeatedly:
\begin{equation}
\bt_k(X;\FF_2)\;=\;\bt_k(X;\QQ)\;+\;t_k(X)\;+\;t_{k-1}(X).
\label{eq:uct-count}
\end{equation}
In words, the mod-$2$ Betti number is the rational Betti number plus the number of
$2$-torsion classes in this degree and the one below. In particular
$\bt_k(X;\FF_2)\ge\bt_k(X;\QQ)$ always. Equality holds if and only if $\Ht_{k-1}(X;\ZZ)$
and $\Ht_k(X;\ZZ)$ contain no element of order $2$. This is why classical software detects
torsion by comparing Betti numbers over two fields.

\subsection{Joins and clique complexes}
\label{sec:joins}

The join $X*Y$ of complexes on disjoint vertex sets has faces $\sigma\cup\tau$ with
$\sigma\in X\cup\{\emptyset\}$ and $\tau\in Y\cup\{\emptyset\}$. The graph join
$G\vee\Gamma$ is the disjoint union of $G$ and $\Gamma$ together with every edge between
the two vertex sets. A graph is \emph{join-irreducible} if it is not isomorphic to
$G_1\vee G_2$ with both $G_1$ and $G_2$ nonempty; equivalently, its complement is
connected. The connected components of $\overline H$ are the \emph{co-components} of
$H$, and $H$ is the join of the subgraphs induced on its co-components.

\begin{lemma}[Clique complexes are closed under joins]
\label{lem:join-clique}
$\Cl(G\vee\Gamma)=\Cl(G)*\Cl(\Gamma)$.
\end{lemma}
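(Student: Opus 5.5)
The plan is to prove equality of the two complexes by showing that they have the same vertex set and the same faces. Both complexes live on the disjoint union $V(G)\sqcup V(\Gamma)$. The key steps are to identify the cliques of $G\vee\Gamma$ and to match them with the faces of the join. We also fix the conventions: the empty set counts as a clique, and it is the empty face of every complex.

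First take a vertex set $S\subseteq V(G)\sqcup V(\Gamma)$ and write $S=\sigma\cup\tau$ with $\sigma=S\cap V(G)$ and $\tau=S\cap V(\Gamma)$. I will show that $S$ induces a clique in $G\vee\Gamma$ if and only if $\sigma$ induces a clique in $G$ and $\tau$ induces a clique in $\Gamma$. Pairs of vertices in $S$ come in three kinds.
\begin{itemize}
\item A pair inside $\sigma$ is an edge of $G\vee\Gamma$ exactly when it is an edge of $G$. This holds because the graph join adds no edges among the vertices of $G$.
\item The same is true for a pair inside $\tau$, with $\Gamma$ in place of $G$.
\item A mixed pair with one vertex in $\sigma$ and one in $\tau$ is always an edge of $G\vee\Gamma$, by the definition of the graph join.
\end{itemize}
So the clique condition on $S$ reduces to the clique conditions on $\sigma$ and on $\tau$ separately. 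Here $\sigma$ or $\tau$ may be empty, and the empty set is trivially a clique.

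Second, I translate this into faces. By the definition of the clique complex, the characterisation says that $S$ is a face of $\Cl(G\vee\Gamma)$ if and only if $\sigma\in\Cl(G)\cup\{\emptyset\}$ and $\tau\in\Cl(\Gamma)\cup\{\emptyset\}$. This is exactly the definition of a face $\sigma\cup\tau$ of $\Cl(G)*\Cl(\Gamma)$. The decomposition $S=\sigma\cup\tau$ is unique because the two vertex sets are disjoint, so the two face sets coincide and the complexes are equal.

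There is no real obstacle here. The only point needing care is bookkeeping at the boundary cases: the face $\sigma=\emptyset=\tau$, and the faces lying entirely in one factor. These are handled by allowing the empty face in the join definition and counting the empty set as a clique. The cases where $G$ or $\Gamma$ has no vertices then also go through.
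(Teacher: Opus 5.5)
Your proof is correct and is essentially the paper's argument: split $S$ into its parts in $V(G)$ and $V(\Gamma)$, note that every cross pair is an edge of $G\vee\Gamma$, and recognise the resulting condition as the definition of a face of $\Cl(G)*\Cl(\Gamma)$. Your handling of the empty face and of faces lying in one factor matches the paper's join convention $\sigma\in X\cup\{\emptyset\}$, $\tau\in Y\cup\{\emptyset\}$.
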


\begin{proof}
A vertex set $S=S_G\sqcup S_\Gamma$ of $G\vee\Gamma$ is a clique if and only if $S_G$ is a
clique of $G$ and $S_\Gamma$ is a clique of $\Gamma$, because every cross pair is an edge.
That is precisely the condition for $S_G\cup S_\Gamma$ to be a face of
$\Cl(G)*\Cl(\Gamma)$.
\end{proof}

The next lemma is the single chain-level fact from which everything we do follows. The
chains of a join form the tensor product of the chains of the factors. We fix a vertex
order on $X*Y$ in which every vertex of $X$ precedes every vertex of $Y$.

\begin{lemma}[The augmented chain complex of a join]
\label{lem:join-chains}
Let $\sigma\in\Ct_i(X)$ and $\tau\in\Ct_j(Y)$ be basis simplices, the empty face allowed,
with $i,j\ge-1$. The assignment $\sigma\otimes\tau\mapsto\sigma\cup\tau$ extends to an
isomorphism of graded abelian groups
\begin{equation}
\Ct_k(X*Y;\ZZ)\;\cong\;\bigoplus_{i+j=k-1}\Ct_i(X;\ZZ)\otimes\Ct_j(Y;\ZZ),
\qquad -1\le k\le\dim X+\dim Y+1,
\label{eq:join-chains}
\end{equation}
under which the boundary map of $X*Y$ becomes
\begin{equation}
B(\sigma\otimes\tau)\;=\;B\sigma\otimes\tau\;+\;(-1)^{i+1}\,\sigma\otimes B\tau,
\qquad \sigma\in\Ct_i(X).
\label{eq:join-boundary}
\end{equation}
In other words, $\Ct_*(X*Y;\ZZ)$ is the tensor product of chain complexes
$\Ct_*(X;\ZZ)\otimes\Ct_*(Y;\ZZ)$ with degrees shifted up by one. Over $\RR$ the
isomorphism \eqref{eq:join-chains} is an isometry for the simplex inner products.
\end{lemma}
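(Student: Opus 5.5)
The plan is to verify the statement directly on basis elements, in three steps: the bijection on faces, the sign computation for the boundary, and the isometry over $\RR$.

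\emph{Bijection on faces.} Faces of $X*Y$ are exactly the disjoint unions $\sigma\cup\tau$ with $\sigma\in X\cup\{\emptyset\}$ and $\tau\in Y\cup\{\emptyset\}$. The vertex sets are disjoint, so the decomposition of a face is unique: $\sigma=\rho\cap V(X)$ and $\tau=\rho\cap V(Y)$. A face with $k+1$ vertices splits as $(i+1)+(j+1)$ vertices, that is $i+j=k-1$, with $i,j\ge-1$. The case $i=j=-1$ gives the empty face in degree $-1$. Hence the assignment $\sigma\otimes\tau\mapsto\sigma\cup\tau$ is a bijection between the basis $\{\sigma\otimes\tau\}$ of the right-hand side of \eqref{eq:join-chains} and the basis of $\Ct_k(X*Y;\ZZ)$. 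Its linear extension is therefore an isomorphism of free abelian groups in each degree.

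\emph{Boundary formula.} Orient each simplex by the fixed vertex order. Write $\sigma=[x_0,\dots,x_i]$ and $\tau=[y_0,\dots,y_j]$. Because every $X$-vertex precedes every $Y$-vertex, the ordered face is $\sigma\cup\tau=[x_0,\dots,x_i,y_0,\dots,y_j]$. The boundary is
\[
\sum_{a=0}^{i}(-1)^a[\dots\widehat{x_a}\dots,y_0,\dots,y_j]+\sum_{b=0}^{j}(-1)^{i+1+b}[x_0,\dots,x_i,\dots\widehat{y_b}\dots].
\]
The first sum is $B\sigma\cup\tau$ and the second is $(-1)^{i+1}\sigma\cup B\tau$, which gives \eqref{eq:join-boundary}. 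The edge cases with the augmented convention need separate checks:
\begin{itemize}
\item If $\sigma=\emptyset$, so $i=-1$, then $B\sigma=0$ and the sign is $(-1)^0=1$. This matches the boundary of $\tau$ viewed inside $X*Y$.
\item If $\sigma$ is a vertex and $\tau=\emptyset$, then $B\sigma=\emptyset$, which corresponds to $\emptyset\otimes\emptyset$, the empty face. This is consistent with $B_0$.
\item If $\tau$ is a vertex, the term $\sigma\otimes B\tau=\sigma\otimes\emptyset$ is correctly the face $\sigma$ with sign $(-1)^{i+1}$.
\end{itemize}
Since $\{\sigma\otimes\tau\}$ is a basis, checking \eqref{eq:join-boundary} on it suffices. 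This exhibits the Koszul-sign tensor product differential with total degree $i+j+1=k$, which is the stated shift.

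\emph{Isometry and main obstacle.} Over $\RR$, the map sends an orthonormal basis (the product of simplex bases, with the tensor inner product) bijectively to the orthonormal simplex basis of $X*Y$. It is therefore an isometry. The only step where an error can slip in is the sign bookkeeping in the degenerate augmented cases. I would handle these by noting that the general formula above remains valid when one of the lists is empty, since empty sums vanish and the position index $i+1+b$ is still correct.
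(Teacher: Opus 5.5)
Your proposal is correct and follows essentially the same route as the paper: the bijection between faces of $X*Y$ and pairs $(\sigma,\tau)$ gives both the degreewise isomorphism and the isometry, and expanding $B[x_0,\dots,x_i,y_0,\dots,y_j]$ under the $X$-before-$Y$ vertex order gives the boundary formula. Like the paper, you check the augmented edge cases ($\sigma=\emptyset$, and a vertex $\sigma$ with $\tau=\emptyset$) separately; your extra check of the case where $\tau$ is a vertex is a correct detail that the paper leaves implicit.
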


\begin{proof}
The faces of $X*Y$ are in bijection with pairs $(\sigma,\tau)$, and
$\lvert\sigma\cup\tau\rvert=(i+1)+(j+1)$. This gives \eqref{eq:join-chains} and the
isometry statement. For \eqref{eq:join-boundary}, we write $\sigma=[x_0,\dots,x_i]$ and
$\tau=[y_0,\dots,y_j]$. With the chosen order
$\sigma\cup\tau=[x_0,\dots,x_i,y_0,\dots,y_j]$ and
\[
B(\sigma\cup\tau)=\sum_{a=0}^{i}(-1)^a[\dots\widehat{x_a}\dots,\tau]
+\sum_{b=0}^{j}(-1)^{i+1+b}[\sigma,\dots\widehat{y_b}\dots]
= B\sigma\cup\tau+(-1)^{i+1}\sigma\cup B\tau .
\]
The cases $i=-1$ or $j=-1$ are covered by the augmented convention. For
$\sigma=\emptyset$ the first sum is empty and the sign is $(-1)^0=1$. For a vertex
$\sigma=[x]$ and $\tau=\emptyset$ the formula returns
$B[x]\otimes\emptyset=\emptyset\otimes\emptyset$, the empty face of $X*Y$.
\end{proof}

\subsection{Input and access model}
\label{sec:model}

All problems below receive an unweighted graph on $N$ vertices through the adjacency oracle
\begin{equation}
O_E\ket{u,v,b}\;=\;\ket{u,\,v,\,b\oplus[\{u,v\}\in E]},
\label{eq:edge-oracle}
\end{equation}
where the bracket denotes the edge-membership bit. A classical algorithm may query the
same oracle. This is the succinct model. The graph is a polynomial-size description of a
complex with up to $\binom{N}{k+1}$ simplices in degree $k$. The distinction between
succinct and explicit input is essential. Given the boundary matrices explicitly, integral
homology and its torsion are computable in polynomial time in the matrix size by Smith
normal form, see e.g.\ \cite{Lor08}. All hardness in this area concerns the succinct
model, where the matrices are exponentially large in $N$.

\subsection{The gadget: a projective plane}
\label{sec:gadget}

We need a clique complex with no rational homology and exactly one torsion class. The
smallest triangulation of $\mathbb{RP}^2$ is not a clique complex, but its barycentric
subdivision is. The six-vertex projective plane is the standard torsion-bearing example in
the theory of simplicial spanning trees \cite{Kal83,DKM13}. We recall it in full so that
the constants below are self-contained.

\begin{example}[Six-vertex projective plane]
\label{ex:rp2}
Writing $123$ for $\{1,2,3\}$, let $T$ be the complex on $[6]$ with facets
\[
123,\ 124,\ 135,\ 146,\ 156,\ 236,\ 245,\ 256,\ 345,\ 346 .
\]
Each of the $15$ edges of $K_6$ lies in exactly two facets, every vertex link is a
five-cycle, and $\chi(T)=6-15+10=1$. So $T$ is a closed surface with Euler characteristic
$1$, namely the standard six-vertex triangulation of $\mathbb{RP}^2$. It has
$f(T)=(6,15,10)$, $\Ht_1(T;\ZZ)\cong\ZZ/2$ and $\Ht_q(T;\ZZ)=0$ for $q\neq1$.
\end{example}

$T$ is not itself a clique complex. Its $1$-skeleton is $K_6$, whose clique complex is the
$5$-simplex. Barycentric subdivision repairs this. The subdivision $\sd T$ is the order
complex of the face poset of $T$. It is therefore the clique complex of the comparability
graph $\Gamma$ of that poset, because a chain in a poset is the same thing as a clique in
its comparability graph. Moreover $\lvert\sd T\rvert\cong\lvert T\rvert$
\cite[\S15]{Mun84}. Since $T$ has $6+15+10=31$ nonempty faces, $\sd T$ has $31$ vertices.
Counting chains, it has $90$ edges and $60$ triangles, and $\chi=31-90+60=1$. Throughout,
we set
\[
Y:=\sd T,\qquad \Gamma:=\text{the comparability graph of the face poset of }T,
\qquad Y=\Cl(\Gamma),
\]
and we record
\begin{equation}
\Ht_1(Y;\ZZ)\cong\ZZ/2,\qquad \Ht_q(Y;\ZZ)=0\ (q\neq1),\qquad\text{hence}\qquad
\Ht_*(Y;\QQ)=0 .
\label{eq:gadget}
\end{equation}
We verified all of these statements, and the eigenvalues quoted in \cref{rem:cY}, by
exact computation: Smith normal form of the boundary matrices of $T$ and of $Y$, and exact
diagonalisation of the four augmented Laplacians of $Y$.

We will also need that the gadget cannot be split into a join of smaller pieces, so that
it can in principle be located inside a larger graph.

\begin{lemma}[$\Gamma$ is join-irreducible]
\label{lem:gamma-irreducible}
The complement $\overline\Gamma$ is connected.
\end{lemma}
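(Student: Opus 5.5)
We want to show that the complement of the comparability graph $\Gamma$ of the face poset of $T$ is connected. Two faces of $T$ are adjacent in $\overline\Gamma$ exactly when they are distinct and incomparable, meaning neither contains the other. The plan is to exhibit an explicit path in $\overline\Gamma$ from every one of the $31$ vertices to a fixed vertex. We treat the three strata (vertices, edges, triangles) in turn.

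\emph{Vertices.} The six vertices $1,\dots,6$ are pairwise incomparable, since distinct singletons never contain one another. So they form a $6$-clique in $\overline\Gamma$, and all vertices lie in one component $C$.

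\emph{Edges.} Each edge $ab$ of $T$ is a $2$-subset of $[6]$. Because $6\ge3$, there is a vertex $c\notin\{a,b\}$, and $\{c\}$ and $ab$ are incomparable. Hence every edge is adjacent in $\overline\Gamma$ to some vertex, and so every edge lies in $C$.

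\emph{Triangles.} Each facet $abc$ misses at least one vertex $d\in[6]\setminus\{a,b,c\}$, and $\{d\}\not\subseteq abc$. So every triangle is adjacent to a vertex and lies in $C$ as well.

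Since the $6+15+10=31$ faces exhaust the vertex set of $\Gamma$, the graph $\overline\Gamma$ is connected. By the equivalence recalled in \cref{sec:joins}, $\Gamma$ is therefore join-irreducible.

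There is no real obstacle here. The only point to check is that each face misses some vertex of $[6]$, which holds because every face has at most $3$ vertices. The argument needs nothing beyond \cref{ex:rp2}, and in fact shows more: $\overline\Gamma$ has diameter at most $3$.
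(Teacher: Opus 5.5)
Your proof is correct and uses the same two facts as the paper: distinct faces of the same dimension are incomparable, and each edge or triangle of $T$ is incomparable with a vertex it omits. The paper argues by contradiction, showing that a decomposition $\Gamma=A\vee B$ forces each level, and then all three levels, onto one side. You instead show directly that $\overline\Gamma$ is connected by routing every face through the vertex clique, which also gives you the diameter bound.
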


\begin{proof}
Suppose $\Gamma=A\vee B$ with $A,B$ nonempty. Then every vertex of $A$ is adjacent in
$\Gamma$ to every vertex of $B$, i.e.\ every cross pair is comparable in the face poset of
$T$. Two distinct faces of $T$ of the same dimension are incomparable. So the six vertices
of $T$ lie on one side, the fifteen edges lie on one side, and the ten triangles lie on
one side. A vertex $v$ of $T$ and a triangle not containing $v$ are incomparable, and such
a triangle exists since $v$ lies in only five of the ten triangles. So the vertex level and
the triangle level lie on the same side. Likewise $v$ and an edge not containing $v$ are
incomparable, so the vertex level and the edge level lie on the same side. All three
levels are therefore on one side and the other side is empty, a contradiction.
\end{proof}

\section{Holes   become torsion}
\label{sec:reduction}

We prove two independent results in this section. The first is integral and homological,
the second real and spectral. We derive both from \cref{lem:join-chains} applied to a
join with $Y$.

\subsection{Non-trivial homology classes and torsion}
\label{sec:planting}

\begin{lemma}[The reduction]
\label{lem:planting}
For every finite simplicial complex $X$ and every $n$,
\begin{equation}
\Ht_{n+2}(X*Y;\ZZ)\;\cong\;(\ZZ/2)^{\,\bt_n(X;\FF_2)},
\qquad\text{and}\qquad \Ht_q(X*Y;\QQ)=0\ \text{ for all } q .
\label{eq:planting}
\end{equation}
\end{lemma}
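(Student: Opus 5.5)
By \cref{lem:join-chains}, the augmented chain complex of $X*Y$ is the tensor product $\Ct_*(X;\ZZ)\otimes\Ct_*(Y;\ZZ)$ with degrees shifted up by one. Both factors are complexes of finitely generated free abelian groups. The plan is therefore to apply the algebraic K\"unneth theorem for chain complexes \cite[Theorem~3B.5]{Hat02}:
\[
\Ht_{q}(X*Y;\ZZ)\;\cong\;\bigoplus_{i+j=q-1}\Ht_i(X;\ZZ)\otimes\Ht_j(Y;\ZZ)\;\oplus\;\bigoplus_{i+j=q-2}\Tor\bigl(\Ht_i(X;\ZZ),\Ht_j(Y;\ZZ)\bigr),
\]
where $i,j\ge-1$ and the augmented groups are used throughout. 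The only nonzero reduced homology of $Y$ is $\Ht_1(Y;\ZZ)\cong\ZZ/2$ by \eqref{eq:gadget}. So in degree $q=n+2$ only $j=1$ survives. The tensor summand is $\Ht_n(X;\ZZ)\otimes\ZZ/2$, and the $\Tor$ summand is $\Tor(\Ht_{n-1}(X;\ZZ),\ZZ/2)$.

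By the universal coefficient theorem, exactly as displayed in \cref{sec:torsion}, these two groups are the two summands of $\Ht_n(X;\FF_2)$. Each is an $\FF_2$-vector space, so their direct sum is $(\ZZ/2)^{\bt_n(X;\FF_2)}$. The rational statement can be obtained in either of two ways. One is to tensor the same Künneth formula with $\QQ$: every summand is killed, since $\Ht_*(Y;\QQ)=0$. The other is to note that the integral groups just computed are all finite, for every $q$. Either way $\Ht_q(X*Y;\QQ)=0$.

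The main obstacle is making sure the Künneth theorem really applies in the augmented, degree-shifted setting, including the edge degrees. In low degrees, $i$ or $j$ may equal $-1$, and we must check that nothing spurious appears there. Concretely, we would confirm that the reduced homology of the augmented complexes is exactly the homology of $\Ct_*$, which holds by definition. We would also confirm that $\Ht_{-1}(Y)=0$ because $Y$ is nonempty, so $\Ht_{-1}$ of $Y$ contributes nothing. For $X$, the convention $\Ht_{-1}(\emptyset)=\ZZ$ is consistent: if $X=\emptyset$, then $X*Y=Y$, and the formula gives $\Ht_1(Y)=\ZZ/2=(\ZZ/2)^{\bt_{-1}(\emptyset;\FF_2)}$. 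The $\Tor$ term requires one complex to be free in each degree, which holds here since both complexes consist of free abelian groups.

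Finally, the shift by one in \eqref{eq:join-chains} accounts for the degree offsets. It is the reason degree $n$ of $X$ paired with degree $1$ of $Y$ lands in degree $n+2$ of the join.
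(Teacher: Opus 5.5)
Your proposal is correct and matches the paper's proof: you apply the algebraic K\"unneth formula to the degree-shifted tensor product from \cref{lem:join-chains}, observe that only $j=1$ survives, identify the two summands with the universal-coefficient decomposition of $\Ht_n(X;\FF_2)$, and get the rational vanishing from $\Ht_*(Y;\QQ)=0$. Your checks of the augmented low-degree cases, such as $\Ht_{-1}(Y)=0$ and $X=\emptyset$, and your alternative finiteness argument for the rational statement are consistent with this; they spell out details the paper leaves implicit.
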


\begin{proof}
By \cref{lem:join-chains}, $\Ht_m(X*Y;\ZZ)$ is the $(m-1)$st homology of the tensor
product of the chain complexes $\Ct_*(X;\ZZ)$ and $\Ct_*(Y;\ZZ)$ of free abelian groups.
The algebraic K\"unneth formula \cite[Theorem~3B.5]{Hat02} therefore gives
\begin{equation}
\Ht_m(X*Y;\ZZ)\;\cong\;\bigoplus_{i+j=m-1}\Ht_i(X;\ZZ)\otimes\Ht_j(Y;\ZZ)\;\oplus\;
\bigoplus_{i+j=m-2}\Tor\bigl(\Ht_i(X;\ZZ),\Ht_j(Y;\ZZ)\bigr).
\label{eq:kunneth-join}
\end{equation}
By \eqref{eq:gadget} only $j=1$ contributes to either sum. Taking $m=n+2$ leaves
\[
\Ht_{n+2}(X*Y;\ZZ)\;\cong\;\bigl(\Ht_n(X;\ZZ)\otimes\ZZ/2\bigr)\;\oplus\;
\Tor\bigl(\Ht_{n-1}(X;\ZZ),\ZZ/2\bigr),
\]
which is the right-hand side of the universal coefficient theorem for $\Ht_n(X;\FF_2)$
\cite[Theorem~3A.3]{Hat02}. Hence $\Ht_{n+2}(X*Y;\ZZ)$ is an $\FF_2$-vector space of
dimension $\bt_n(X;\FF_2)$. For the rational statement, we tensor \eqref{eq:kunneth-join}
with $\QQ$, or apply the K\"unneth formula over the field $\QQ$ directly. Every term is
annihilated, since $\Ht_*(Y;\QQ)=0$.
\end{proof}

\begin{remark}[Other primes]
\label{rem:p-torsion}
Nothing in our proof is special to the prime $2$. Let $Y_p$ be any rationally acyclic
clique complex with $\Ht_1(Y_p;\ZZ)\cong\ZZ/p$ and no other homology. The barycentric
subdivision of any triangulated Moore space $M(\ZZ/p,1)$ will do. The same K\"unneth
computation gives $\Ht_{n+2}(X*Y_p;\ZZ)\cong(\ZZ/p)^{\bt_n(X;\FF_p)}$, so the mod-$p$
Betti number of $X$ becomes $p$-torsion of the join. We restrict to $p=2$ for
concreteness, and because the torsion in the applications of \cref{sec:intro} is
$2$-torsion.
\end{remark}

\begin{remark}[When torsion is easy]
\label{rem:easy}
Combining \cref{lem:planting} with \eqref{eq:uct-count}, the $2$-rank of the torsion
planted in degree $n+2$ is
\[
t_{n+2}(X*Y)\;=\;\bt_n(X;\FF_2)\;=\;\bt_n(X;\QQ)\;+\;t_n(X)\;+\;t_{n-1}(X).
\]
If $X$ is $2$-torsion-free in degrees $n-1$ and $n$, the two correction terms vanish and
$t_{n+2}(X*Y)=\bt_n(X;\QQ)$: on such instances the planted torsion carries exactly a
rational Betti number of the factor $X$, the quantity that Betti number estimation
computes. Our hardness results therefore concern precisely the instances on which this
collapse is not promised.
\end{remark}

\subsection{A constant spectral gap}
\label{sec:gap}

\begin{lemma}[The Laplacian of a join is a Kronecker sum]
\label{lem:join-laplacian}
Under the isometric identification \eqref{eq:join-chains} over $\RR$, the augmented Hodge
Laplacian $\Delta_k(X*Y)$ preserves each summand $\Ct_i(X)\otimes\Ct_j(Y)$, $i+j=k-1$, and
acts on it as
\[
\Delta_i(X)\otimes I\;+\;I\otimes\Delta_j(Y).
\]
Consequently
\begin{equation}
\spec\Delta_k(X*Y)\;=\;\bigcup_{\substack{i+j=k-1\\ \Ct_i(X)\otimes\Ct_j(Y)\neq0}}
\bigl\{\lambda+\mu\;:\;\lambda\in\spec\Delta_i(X),\ \mu\in\spec\Delta_j(Y)\bigr\}.
\label{eq:sumset}
\end{equation}
\end{lemma}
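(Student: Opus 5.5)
The plan is to work entirely at the level of the isometric identification of \cref{lem:join-chains}, transported to $\RR$, and to compute $B^*B$ and $BB^*$ for the join boundary separately. I write $\partial_X=B\otimes I$ and $\partial_Y=\epsilon\otimes B$ on $\Ct_*(X)\otimes\Ct_*(Y)$, where $\epsilon$ is the diagonal sign operator acting as $(-1)^{i+1}$ on $\Ct_i(X)$. By \eqref{eq:join-boundary}, the join boundary is $B=\partial_X+\partial_Y$. Since the identification is an isometry and the simplex bases are orthonormal, adjoints can be taken factorwise: $\partial_X^*=B^*\otimes I$ and $\partial_Y^*=\epsilon\otimes B^*$, using that $\epsilon$ is real diagonal and self-adjoint with $\epsilon^2=I$.

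Next I expand
\[
B^*B+BB^*=(\partial_X^*\partial_X+\partial_X\partial_X^*)+(\partial_Y^*\partial_Y+\partial_Y\partial_Y^*)+(\partial_X^*\partial_Y+\partial_Y\partial_X^*)+(\partial_Y^*\partial_X+\partial_X\partial_Y^*).
\]
The first bracket is $\Delta(X)\otimes I$. The second is $\epsilon^2\otimes(B^*B+BB^*)=I\otimes\Delta(Y)$. The main point is that the cross terms vanish, and this is where the sign $\epsilon$ matters. Both $B$ and $B^*$ change the $X$-degree by $\pm1$, so they anticommute with $\epsilon$: $\epsilon B=-B\epsilon$ and $\epsilon B^*=-B^*\epsilon$. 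Then
\[
\partial_X^*\partial_Y=B^*\epsilon\otimes B,\qquad \partial_Y\partial_X^*=\epsilon B^*\otimes B=-B^*\epsilon\otimes B,
\]
and these cancel. The remaining pair cancels by the same computation, being its adjoint.

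The main care needed is bookkeeping at the boundary degrees. The relevant cases are $i=-1$, $j=-1$, and the top degrees where $B_{\dim+1}=0$. Under the augmented convention of \cref{sec:notation} every operator involved is still a genuine map between the (possibly zero) chain groups, so the identity holds on the whole graded space.

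Restricting to total degree $k$ then gives that $\Delta_k(X*Y)$ preserves each summand $\Ct_i(X)\otimes\Ct_j(Y)$ with $i+j=k-1$. This holds because $\Delta(X)\otimes I$ and $I\otimes\Delta(Y)$ each preserve bidegree, and on that summand the operator is $\Delta_i(X)\otimes I+I\otimes\Delta_j(Y)$. For the spectrum I take eigenbases of $\Delta_i(X)$ and $\Delta_j(Y)$, which exist since both are real symmetric. Their tensor products form an eigenbasis of the Kronecker sum with eigenvalues $\lambda+\mu$. The spectrum of the block-diagonal $\Delta_k(X*Y)$ is the union over the nonzero blocks, which is \eqref{eq:sumset}.
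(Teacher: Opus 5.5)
Your proposal is correct and follows the paper's proof: you write $B=B_X\otimes I+\varepsilon\otimes B_Y$, take adjoints factorwise via the isometry, cancel the cross terms because $B_X$ and $B_X^{*}$ anticommute with $\varepsilon$, and read off the Kronecker sum and the sumset spectrum from a product eigenbasis. Your remark that the second pair of cross terms is the adjoint of the first is a small shortcut; the paper simply writes both pairs out.
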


\begin{proof}
Let $\varepsilon$ be the operator acting as $(-1)^{i+1}$ on $\Ct_i(X)$. By
\eqref{eq:join-boundary}, $B=B_X\otimes I+\varepsilon\otimes B_Y$ on the total complex.
Since the identification is an isometry, adjoints may be computed factorwise. Expanding
$\Delta=B^{*}B+BB^{*}$, the cross terms are
\[
(B_X^{*}\varepsilon+\varepsilon B_X^{*})\otimes B_Y\;+\;
(\varepsilon B_X+B_X\varepsilon)\otimes B_Y^{*}\;=\;0,
\]
because $B_X^{*}$ raises and $B_X$ lowers the $X$-degree by one, so each anticommutes with
$\varepsilon$. The remaining terms are
$(B_X^{*}B_X+B_XB_X^{*})\otimes I+\varepsilon^2\otimes(B_Y^{*}B_Y+B_YB_Y^{*})
=\Delta_X\otimes I+I\otimes\Delta_Y$. Restricting to total degree $k-1$ gives the claim.
The two summands of a Kronecker sum commute and are simultaneously diagonalisable in a
product eigenbasis, which gives \eqref{eq:sumset}.
\end{proof}

The same decomposition appears in \cite{KK24} and in \cite[Appendix~E]{BSG+24}. We include
the short proof so that the constant below is fully justified.

\begin{proposition}[Gap from a rationally acyclic join]
\label{prop:gap}
Let $Y$ be a finite complex with $\Ht_*(Y;\RR)=0$ and set
$c_Y:=\min_{-1\le q\le\dim Y}\lambda_{\min}(\Delta_q(Y))$. Then $c_Y>0$, and for
\emph{every} finite complex $X$ and every $k$ with $\Ct_k(X*Y)\neq0$,
\begin{equation}
\lambda_{\min}\bigl(\Delta_k(X*Y)\bigr)\;\ge\;c_Y .
\label{eq:gap}
\end{equation}
Consequently, if the boundary maps of $Z:=X*Y$ are block encoded with normalisation
$\alpha_B$, then for every $r$
\begin{equation}
\zeta^{+}_{\min}\!\left(\frac{B_r(Z)}{\alpha_B}\right)\;\ge\;\frac{\sqrt{c_Y}}{\alpha_B}.
\label{eq:gap-normalized}
\end{equation}
\end{proposition}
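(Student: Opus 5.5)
The plan has three parts: show $c_Y>0$, push the bound through the Kronecker-sum spectrum of \cref{lem:join-laplacian}, and translate the Laplacian bound into a singular-value bound on the boundary maps.

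\textbf{Positivity of $c_Y$.} Each $\Delta_q(Y)$ is a finite positive semidefinite matrix. By the Hodge theorem (\cref{sec:notation}), $\ker\Delta_q(Y)\cong\Ht_q(Y;\RR)$, and this vanishes for every $q$ from $-1$ to $\dim Y$ by hypothesis; for the gadget this is \eqref{eq:gadget}. So each $\lambda_{\min}(\Delta_q(Y))$ is strictly positive. A minimum over finitely many positive numbers is positive, so $c_Y>0$.

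\textbf{The bound \eqref{eq:gap}.} Fix $X$ and $k$ with $\Ct_k(X*Y)\neq0$.
\begin{itemize}
\item Apply \eqref{eq:sumset}. Every eigenvalue of $\Delta_k(X*Y)$ has the form $\lambda+\mu$, where $\lambda\in\spec\Delta_i(X)$ and $\mu\in\spec\Delta_j(Y)$ for some $i+j=k-1$ with $\Ct_i(X)\otimes\Ct_j(Y)\neq0$.
\item Nonvanishing of that summand forces $\Ct_j(Y)\neq0$, so $-1\le j\le\dim Y$. Hence $\mu\ge\lambda_{\min}(\Delta_j(Y))\ge c_Y$.
\item $\Delta_i(X)$ is positive semidefinite, so $\lambda\ge0$. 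Therefore $\lambda+\mu\ge c_Y$.
\item The empty face lies in both factors, so the degree range of $X*Y$ is fully covered by the decomposition \eqref{eq:join-chains}. No degree is left unaccounted for.
\end{itemize}

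\textbf{The singular-value bound \eqref{eq:gap-normalized}.} Use the facts recorded in \cref{sec:notation}. The positive squared singular values of $B_r(Z)$ are exactly the nonzero eigenvalues of $L^{\mathrm{up}}_{r-1}(Z)=B_r B_r^{*}$. Every nonzero eigenvalue of $L^{\mathrm{up}}_{r-1}$ is also an eigenvalue of $\Delta_{r-1}(Z)$.
\begin{itemize}
\item The justification is that $L^{\mathrm{up}}$ and $L^{\mathrm{down}}$ commute and have orthogonal images. So an eigenvector of $L^{\mathrm{up}}$ with nonzero eigenvalue lies in $\operatorname{im}B_r$, which is contained in $\ker B_{r-1}$, and is therefore killed by $L^{\mathrm{down}}$.
\item If $B_r(Z)$ has a positive singular value $s$, then $\Ct_{r-1}(Z)\neq0$. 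Applying \eqref{eq:gap} in degree $r-1$ gives $s^2\ge c_Y$.
\item Dividing by the normalisation $\alpha_B$ scales every singular value by $1/\alpha_B$, which gives $\sqrt{c_Y}/\alpha_B$.
\item If $B_r(Z)=0$ there are no positive singular values and the statement is vacuous. I will interpret $\zeta^{+}_{\min}$ that way.
\end{itemize}

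\textbf{Expected difficulty.} No step is a real obstacle; the argument is bookkeeping. The point to watch is the index convention: $B_r$ maps into degree $r-1$, so \eqref{eq:gap} is needed in degree $r-1$, which requires $\Ct_{r-1}(Z)\neq0$. Both the bottom degree $-1$ and the top degree $\dim X+\dim Y+1$ are covered by the augmented convention, so no extra argument is needed there.
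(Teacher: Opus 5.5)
Your proposal is correct and follows the paper's proof step for step. It uses the Hodge theorem to get $c_Y>0$, then the Kronecker-sum spectrum \eqref{eq:sumset} with $\lambda\ge0$ and $\mu\ge c_Y$. Finally it identifies the positive squared singular values of $B_r(Z)$ with the nonzero eigenvalues of $L^{\mathrm{up}}_{r-1}(Z)$, which lie in $\spec\Delta_{r-1}(Z)$. Your extra bookkeeping only makes explicit what the paper leaves implicit: checking $\Ct_j(Y)\neq0$ in each contributing summand, and $\Ct_{r-1}(Z)\neq0$ whenever $B_r(Z)$ has a positive singular value.
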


\begin{proof}
By the Hodge theorem $\ker\Delta_q(Y)\cong\Ht_q(Y;\RR)=0$, so each augmented
$\Delta_q(Y)$ is positive definite and $c_Y>0$. In \eqref{eq:sumset} every $\lambda\ge0$,
because $\Delta_i(X)$ is positive semidefinite, and every $\mu\ge c_Y$ by definition of
$c_Y$. Hence every eigenvalue of $\Delta_k(X*Y)$ is at least $c_Y$. No hypothesis on $X$
was used. For \eqref{eq:gap-normalized}, the positive squared singular values of $B_r(Z)$
are the nonzero eigenvalues of $L^{\mathrm{up}}_{r-1}(Z)$. These are eigenvalues of
$\Delta_{r-1}(Z)$ (\cref{sec:notation}), hence at least $c_Y$. Dividing by $\alpha_B^2$
gives the claim.
\end{proof}

\begin{remark}[The constant is explicit, absolute and attained]
\label{rem:cY}
For $Y=\sd T$ the four augmented Laplacians have sizes $1,31,90,60$ and smallest
eigenvalues
\[
\lambda_{\min}(\Delta_{-1})=31,\qquad \lambda_{\min}(\Delta_0)=2,\qquad
\lambda_{\min}(\Delta_1)=\lambda_{\min}(\Delta_2)=0.0978869674\ldots
\]
So $c_Y=0.0978869674\ldots$, the smallest root of $x^4-8x^3+19x^2-12x+1$. It is an
eigenvalue of $\Delta_2(Y)=B_2^{*}B_2$ of multiplicity three. The first value is
$B_0B_0^{*}$ on the one-dimensional space $\Ct_{-1}$, namely the number of vertices. The
second is the algebraic connectivity of $\Gamma$, since $\Delta_0=L_\Gamma+J$ with
$L_\Gamma$ the graph Laplacian and $J$ the all-ones matrix. Finally
$\lambda_{\min}(\Delta_1)=\lambda_{\min}(\Delta_2)$, because the nonzero spectrum of
$\Delta_1$ is the union of the nonzero spectra of $B_1^{*}B_1$, which equals that of
$L_\Gamma$ and has least nonzero eigenvalue $2$, and of $B_2B_2^{*}$, which equals that of
$\Delta_2$.

Three points deserve emphasis. First, $c_Y$ is a fixed number, not a function of the
input. \Cref{eq:gap} is a \emph{constant} lower bound on the unnormalised Laplacian of
every instance. The inverse-polynomial factor in \eqref{eq:gap-normalized} enters only
through the block-encoding normalisation $\alpha_B$. We note that this is a different kind
of promise from the one in \cite{KK24}. There the \yes\ instances have a kernel and only
the \no\ instances are gapped. Here \emph{no} instance has a kernel. Second, the bound is
attained, so it cannot be improved in general. For $X=S^0$, two isolated vertices, the
summand $\Ct_0(X)\otimes\Ct_1(Y)$ of $\Ct_2(X*Y)$ carries the eigenvalue $0+c_Y$, since
$\Delta_0(S^0)$ is singular. The same happens for any $X$ with $\Ht_i(X;\RR)\neq0$ in some
degree $i$. Third, the constant is a property of the gadget alone. Any rationally acyclic
clique complex $Y'$ with a single torsion class would serve, with its own $c_{Y'}$. The
mechanism, a fixed factor with positive definite Laplacians lifting the spectrum of a
join, is the one used in \cite[Cor.~1]{BSG+24} and \cite{KK24} to construct
well-conditioned examples. The content of \cref{prop:gap} is that no property of the other
factor is needed.

\end{remark}

\section{Hardness of torsion detection}
\label{sec:hardness}

We reduce from the decision problem shown $\QMAone$-hard in \cite{CK24}. It carries no
spectral promise. The promise in our output instances is supplied entirely by
\cref{prop:gap}.

\begin{problem}[Clique homology, $\DCH_k$ \cite{CK24}]
\label{prob:dch}
\emph{Input:} an unweighted graph $G$ on $N$ vertices via \eqref{eq:edge-oracle} and an
integer $k$. \emph{Decide} whether $\Ht_k(\Cl(G);\QQ)\neq0$. We write $\DCHtwo$ for the
same problem with $\FF_2$ in place of $\QQ$.
\end{problem}

\begin{problem}[Gapped torsion detection, $\GTD_{n,g}$]
\label{prob:gtd}
\emph{Input:} an unweighted graph $H$ on $M$ vertices via \eqref{eq:edge-oracle}, an
integer $n$, and $g>0$. \emph{Promise:} $\Ht_n(\Cl(H);\ZZ)$ is finite, and
$\lambda_{\min}(\Delta_j(\Cl(H)))\ge g$ for every $j$ with $\Ct_j(\Cl(H))\neq0$.
\emph{Decide} whether $\Ht_n(\Cl(H);\ZZ)\neq0$.
\end{problem}

The spectral clause of the promise forces $\Ht_*(\Cl(H);\RR)=0$. On legal instances
\emph{all} integral homology is therefore torsion, and the finiteness clause is implied.
We keep it for readability.

The only obstacle to an unconditional $\QMAone$-hardness proof is that \cite{CK24} is
stated over a field of characteristic zero, while our gadget reads off the mod-$2$ Betti
number. By \eqref{eq:uct-count} the two agree exactly when there is no $2$-torsion in
degrees $k-1$ and $k$. Inspecting the proof of \cref{thm:main} below shows that this
agreement is needed on \no\ instances only: on a \yes\ instance the inequality
$\bt_k(\cdot;\FF_2)\ge\bt_k(\cdot;\QQ)$ of \eqref{eq:uct-count} already suffices.

\begin{problem}[Clique homology with $2$-torsion-free \no\ instances, $\DCHtfno_k$]
\label{prob:dch-tfno}
The promise problem with instances $(G,k)$, $G$ unweighted via \eqref{eq:edge-oracle},
\[
\Pi_{\yes}=\bigl\{(G,k):\Ht_k(\Cl(G);\QQ)\neq0\bigr\},\qquad
\]
\[
\Pi_{\no}=\bigl\{(G,k):\Ht_k(\Cl(G);\QQ)=0\ \text{and}\ t_{k-1}(\Cl(G))=t_k(\Cl(G))=0\bigr\}.
\]
\end{problem}

\begin{conjecture}
\label{conj:torsionfree}
$\DCHtfno_k$ is $\QMAone$-hard.
\end{conjecture}

Shrinking the \no\ set of a promise problem can only make it easier, so
\cref{conj:torsionfree} is implied by the stronger hypothesis that $\DCH_k$ remains
$\QMAone$-hard when \emph{all} instances are restricted to be $2$-torsion-free in degrees
$k-1$ and $k$, and it is implied by $\QMAone$-hardness of $\DCH_k$ on any family of
instances whose \no\ members happen to be $2$-torsion-free in those degrees.
$\QMAone$-hardness of $\DCHtwo$ would serve equally well, and would make
\cref{thm:main}(iii) unconditional. The conjecture is not free. The reduction of
\cite{CK24} runs through the correspondence between clique homology and the zero-energy
ground space of a supersymmetricsystem, that is, through Hodge
theory over $\RR$, and nothing in it controls $\Ht_*(\cdot;\FF_2)$ or the integral torsion
subgroups.

\begin{theorem}
\label{thm:main}
Consider the map $(G,k)\mapsto(G\vee\Gamma,\ n=k+2,\ g=c_Y)$. It adds $31$ vertices and
answers one adjacency query to $G\vee\Gamma$ with one query to $G$ and $O(1)$ reversible
gates. It outputs legal instances of $\GTD_{n,c_Y}$, and:

\begin{enumerate}[label=(\roman*),leftmargin=*,itemsep=1pt]
\item it is a many-one reduction from $\DCHtwo$ to $\GTD_{n,c_Y}$,
\item it is a many-one reduction from $\DCHtfno_k$ to $\GTD_{n,c_Y}$,
\item hence, under \cref{conj:torsionfree}, $\GTD_{n,c_Y}$ is $\QMAone$-hard.
\end{enumerate}
\end{theorem}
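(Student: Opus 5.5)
The plan is to assemble \cref{lem:join-clique}, \cref{lem:planting} and \cref{prop:gap}, and to verify the access-model claims directly. Set $H:=G\vee\Gamma$, with the $N$ vertices of $G$ labelled $1,\dots,N$ and the $31$ vertices of $\Gamma$ labelled $N+1,\dots,N+31$.

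\textbf{Oracle simulation.} A query $\ket{u,v,b}$ to $H$ is answered by case distinction on whether $u,v\le N$.
\begin{itemize}
\item If both endpoints are at most $N$, forward the query to $O_E$ for $G$.
\item If both exceed $N$, look up the fixed $31\times31$ adjacency table of $\Gamma$.
\item If the endpoints are on opposite sides, flip $b$ unconditionally.
\end{itemize}
This is implemented reversibly with a constant number of comparisons and controlled gates. The comparison with $N$ costs $O(\log N)$ gates, which I will treat as part of the ``$O(1)$ reversible gates'' in the statement, following the same convention as \cite{CK24}. Exactly one query to $G$ is used, and it is uncomputed by the usual compute--copy--uncompute pattern.

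\textbf{Legality.} By \cref{lem:join-clique}, $\Cl(H)=\Cl(G)*Y$. Then \cref{prop:gap}, applied with $X=\Cl(G)$, gives $\lambda_{\min}(\Delta_j(\Cl(H)))\ge c_Y$ for every $j$ with $\Ct_j\ne0$, regardless of $G$. \Cref{lem:planting} gives that $\Ht_{k+2}(\Cl(H);\ZZ)\cong(\ZZ/2)^{\bt_k(\Cl(G);\FF_2)}$, which is finite. So every output is a legal instance of $\GTD_{k+2,c_Y}$.

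\textbf{Correctness.}
\begin{itemize}
\item For (i): by \cref{lem:planting}, $\Ht_{k+2}(\Cl(H);\ZZ)\ne0$ if and only if $\bt_k(\Cl(G);\FF_2)\ne0$. This is exactly the $\DCHtwo$ question, so yes instances map to yes and no instances to no.
\item For (ii), the yes case: if $\bt_k(\Cl(G);\QQ)\neq0$, then by \eqref{eq:uct-count} we have $\bt_k(\Cl(G);\FF_2)\ge\bt_k(\Cl(G);\QQ)>0$, so the output is a yes instance.
\item For (ii), the no case: the promise $t_{k-1}=t_k=0$ together with $\bt_k(\Cl(G);\QQ)=0$ gives $\bt_k(\Cl(G);\FF_2)=0$ through \eqref{eq:uct-count}. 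Hence the torsion group is zero and the output is a no instance.
\item For (iii): compose the reduction in (ii) with \cref{conj:torsionfree}. The map is polynomial time, and a single oracle query simulates a single query, so any $\QMAone$ verification of the target yields one for the source. $\QMAone$-hardness therefore transfers.
\end{itemize}

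\textbf{Main obstacle.} The only delicate point is the oracle bookkeeping. The reduction must be query-efficient in the succinct model: each query to $H$ uses exactly one query to $G$, and workspace garbage has to be uncomputed so the simulation is a valid reversible oracle. The homological content is already fully carried by the earlier lemmas. I would also check the degenerate case where $\Cl(G)$ is empty. Then $\Cl(H)=Y$ and $\bt_k=0$ except $\bt_{-1}=1$, which matches $\Ht_1(Y)=\ZZ/2$ at $n=1$, so the formula remains consistent.
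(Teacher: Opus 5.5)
Your proposal is correct and follows the paper's proof step for step. It uses \cref{lem:join-clique} to get $\Cl(G\vee\Gamma)=\Cl(G)*Y$, then \cref{lem:planting} and \cref{prop:gap} for legality and (i), then $\bt_k(\cdot;\FF_2)\ge\bt_k(\cdot;\QQ)$ from \eqref{eq:uct-count} together with the \no-side promise for (ii), and finally composition with \cref{conj:torsionfree} for (iii). One small correction to the oracle bookkeeping: compute--copy--uncompute applied to the query bit itself would cost two queries to $G$, so instead apply the XOR oracle $O_E$ once directly onto $b$, controlled on the query-free side flags, which then are the only thing that needs uncomputing.
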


\begin{proof}
We write $X:=\Cl(G)$, $W:=\Cl(G\vee\Gamma)$ and $N_W:=N+31$. By \cref{lem:join-clique},
$W=X*Y$. So $W$ is the clique complex of an unweighted graph and the reduction stays in
the input model. One adjacency query to $G\vee\Gamma$ is answered as follows. If both
endpoints lie in $[N]$, we query $G$. If both lie in $V(\Gamma)$, we look up the fixed
graph $\Gamma$. If the endpoints are split, we answer $1$. Since $\Gamma$ is fixed, this
is $O(1)$ reversible gates.

\emph{Validity of the output.} By \cref{lem:planting},
$\Ht_{k+2}(W;\ZZ)\cong(\ZZ/2)^{b}$ with $b:=\bt_k(X;\FF_2)$, so $\Ht_{k+2}(W;\ZZ)$ is
finite, and by \cref{prop:gap} every augmented Laplacian of $W$ has smallest eigenvalue
at least $c_Y$. Hence $(G\vee\Gamma,\,k+2,\,c_Y)$ satisfies the promise of
$\GTD_{n,c_Y}$.

\emph{Correctness of (i).} $\Ht_{k+2}(W;\ZZ)\neq0$ if and only if $b>0$, i.e.\ if and only
if $\Ht_k(X;\FF_2)\neq0$. This is exactly $\DCHtwo$ on $(G,k)$.

\emph{Correctness of (ii).} On a \yes\ instance of $\DCHtfno_k$, $\bt_k(X;\QQ)\ge1$, and
\eqref{eq:uct-count} gives $b=\bt_k(X;\QQ)+t_k(X)+t_{k-1}(X)\ge\bt_k(X;\QQ)\ge1$, since
the two correction terms are nonnegative. So $W$ has torsion. This direction uses no
hypothesis beyond $(G,k)\in\Pi_{\yes}$. On a \no\ instance, $\bt_k(X;\QQ)=0$ and the
\no-side promise gives $t_k(X)=t_{k-1}(X)=0$. Hence $b=0$ and $\Ht_{k+2}(W;\ZZ)=0$.

\emph{(iii)} is (ii) together with \cref{conj:torsionfree}. A polynomial-time many-one
reduction from a $\QMAone$-hard promise problem to $\GTD_{n,c_Y}$ makes the latter
$\QMAone$-hard.
\end{proof}

\begin{corollary}[Unconditional hardness]
\label{cor:np-hard}
$\GTD_{n,c_Y}$ is $\NP$-hard.
\end{corollary}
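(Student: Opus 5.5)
The plan is to compose the reduction of \cref{thm:main}(i) with an existing $\NP$-hardness result for mod-$2$ clique homology. By \cref{thm:main}(i), the map $(G,k)\mapsto(G\vee\Gamma,k+2,c_Y)$ is a polynomial-time many-one reduction from $\DCHtwo$ to $\GTD_{n,c_Y}$. Its output is always a legal instance: the torsion in degree $k+2$ is finite, and the gap $c_Y$ holds by \cref{prop:gap} with no hypothesis on $G$. Hence it suffices to show that $\DCHtwo_k$ is $\NP$-hard for some fixed $k$, or for $k$ given as part of the input.

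For that, I would invoke the $\NP$-hardness of deciding nontrivial homology of clique complexes due to Adamaszek and Stacho \cite{AS12,AS16}. The key point to check is that their hardness holds over $\FF_2$, not only over $\QQ$. The intended route is to observe that the hard instances produced there are torsion-free in the relevant degrees. Typically they are homotopy equivalent to wedges of spheres, or else contractible, according to the $\NP$ instance. By \eqref{eq:uct-count}, with $t_k=t_{k-1}=0$, we then get $\bt_k(\cdot;\FF_2)=\bt_k(\cdot;\QQ)$. So yes and no instances keep their status under the change of coefficients, and the same family witnesses $\NP$-hardness of $\DCHtwo_k$. If the source result is phrased only for $\QQ$ or $\ZZ$ coefficients, this homotopy-type information is exactly what transfers it.

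Alternatively, the argument can go through \cref{thm:main}(ii) rather than (i). The $\NP$-hard family, being torsion-free on all instances, is in particular a restriction of $\DCHtfno_k$. So that reduction applies directly, and the mod-$2$ version need not be mentioned at all.

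Finally, I would compose the two reductions. Both are polynomial time: the join adds $31$ vertices, and each adjacency query costs one query plus $O(1)$ gates, which is also efficient classically. A composition of many-one reductions is again a many-one reduction, so $\GTD_{n,c_Y}$ is $\NP$-hard.

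The main obstacle is the middle step: confirming from \cite{AS12,AS16} that the hard instances have no $2$-torsion in degrees $k-1$ and $k$. I would first look for an explicit homotopy-type description of the instances, such as a wedge of spheres of a single dimension. Failing that, I would re-run their gadget analysis with $\FF_2$ coefficients, since such constructions usually compute homology via independence complexes and suspensions, which behave identically over any field.
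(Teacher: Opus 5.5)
Your proposal is correct and follows essentially the paper's own argument: compose \cref{thm:main}(i) with the Adamaszek--Stacho $\NP$-hardness result, after checking that the hard instances have free integral homology, so that their $\FF_2$ and $\QQ$ Betti numbers agree. The step you flag as the main obstacle is settled in the paper by noting that these instances are independence complexes $\Ind(Q)=\Cl(\overline Q)$ of chordal graphs $Q$, which are vertex-decomposable and hence contractible or homotopy equivalent to wedges of spheres, while the oracle for $\overline Q$ costs one query to $Q$ plus a NOT gate.
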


\begin{proof}
Adamaszek and Stacho \cite{AS12,AS16} prove that it is $\NP$-complete to decide, given a
chordal graph $Q$ and an integer $k$, whether $\Ht_k(\Ind(Q))\neq0$. Here
$\Ind(Q)=\Cl(\overline{Q})$ is the independence complex of $Q$, i.e.\ the clique complex
of its complement. For chordal $Q$ the homology of $\Ind(Q)$ is detected by cross-cycles
\cite{AS12,AS16}: induced subcomplexes isomorphic to the boundary of a cross-polytope
that contain a maximal face of $\Ind(Q)$. In the language of $Q$ these are the
\emph{strong} induced matchings, the induced matchings whose vertex set contains a
maximal independent set of $Q$. The maximality clause is what carries the hardness: a
maximum induced matching of a chordal graph can be found in polynomial time
\cite{Cam89}, whereas deciding whether a chordal graph has a strong induced matching of a
prescribed size is $\NP$-complete \cite{AS16}. Independence complexes of chordal graphs
are vertex-decomposable \cite{Woo09}, hence homotopy equivalent to wedges of spheres or
contractible \cite{Kaw10,Ada17}. Their integral homology is therefore free, and
$\Ht_k(\Ind(Q);\FF_2)\neq0$ if and only if $\Ht_k(\Ind(Q);\QQ)\neq0$. Hence $\DCHtwo$ is
$\NP$-hard on the instances $(\overline Q,k)$. One adjacency query to $\overline Q$ is one
query to $Q$ followed by a NOT gate, and \cref{thm:main}(i) transfers the hardness to
$\GTD_{n,c_Y}$.
\end{proof}

\begin{remark}[What is conditional and what is not]
\label{rem:conditional}
The \yes\ direction of \cref{thm:main}(ii) needs only $\bt_k(\cdot;\FF_2)\ge\bt_k(\cdot;\QQ)$,
immediate from \eqref{eq:uct-count} and valid for every complex. The \no\ direction is
where $2$-torsion-freeness comes in, and it comes in only there. The reason is that
$\bt_k(\cdot;\QQ)=0$ does not force $\Ht_k(\cdot;\FF_2)=0$: $2$-torsion in either of two
adjacent degrees of the input creates a spurious mod-$2$ class, hence spurious torsion
downstream. The unconditional content of our theorem is therefore that \emph{gapped
torsion detection is at least as hard as mod-$2$ clique homology}. $\QMAone$-hardness
follows if that problem is $\QMAone$-hard, or if the \no\ instances of a $\QMAone$-hard
family of rational clique-homology instances can be certified $2$-torsion-free.
\end{remark}

\section{Outlook and open problems}
\label{sec:open}

Torsion is the part of homology that quantum (and classical) TDA has disregarded. We have shown that deciding whether a clique
complex has torsion is as hard as mod 2 Betti number computation. We close with the questions we consider most pressing.

\begin{enumerate}[leftmargin=*,itemsep=3pt]
\item\label{op:containment} \emph{Containment.} Is $\GTD_{n,g}\in\QMA$? For gapped
rational clique homology on weighted graphs, containment holds \cite{KK24} via the
Laplacian ground-energy formulation, which is unavailable for torsion. Our instances are
gapped, so the usual obstruction to containment is absent. What is missing is a verifier.
A containment result would complete the classification and would not conflict with
\cref{thm:main}.

\item\label{op:conjecture} \emph{Discharging \cref{conj:torsionfree}.} Either show that
the \no\ instances of \cite{CK24}, or a modification of them, are $2$-torsion-free in the
two relevant degrees, or prove $\QMAone$-hardness of mod-$2$ clique homology directly.
Either route makes \cref{thm:main}(iii) unconditional.
\item\label{op:unweighted} \emph{The unweighted gapped rational problem.} \cite{KK24}
leave open whether gapped clique homology is $\QMAone$-hard for unweighted graphs.
\Cref{prop:gap} shows that joining with a rationally acyclic factor imposes a constant gap
on an unweighted complex at no cost. It also kills all rational homology. Is there a
gadget that imposes a gap while preserving rational homology in one degree?
\end{enumerate}


\section{Acknowledgements}
We wish to acknowledge the helpful discussions with Dimitrios Thanos and Caesnan Leditto, who suggested the gadget might be useful for investigating torsion.

\end{document}